\keywords{craig interpolation, decidability, abstract model theory.}
\newcommand{\FO}{\textrm{\upshape FO}\xspace}
\newcommand{\SO}{\textrm{\upshape SO}\xspace}
\newcommand{\ML}{\textrm{\upshape ML}\xspace}
\newcommand{\MLD}{\textrm{\upshape ML(D)}\xspace}
\newcommand{\FOxc}{\textrm{\upshape FO$_{\exists,\land}$}\xspace}
\newcommand{\GFO}{\textrm{\upshape GFO}\xspace}
\newcommand{\GNFO}{\textrm{\upshape GNFO}\xspace}
\newcommand{\UNFO}{\textrm{\upshape UNFO}\xspace}
\newcommand{\FOtwo}{\textrm{\upshape FO$^2$}\xspace}
\newcommand{\Ctwo}{\textrm{\upshape C$^2$}\xspace}
\newcommand{\GFOtwo}{\textrm{\upshape GFO$^2$}\xspace}
\newcommand{\FL}{\textrm{\upshape FL}\xspace}
\newcommand{\FF}{\textrm{\upshape FF}\xspace}
\newcommand{\AF}{\textrm{\upshape AF}\xspace}
\newcommand{\GFL}{\textrm{\upshape G$_\FL$}\xspace}
\newcommand{\GFF}{\textrm{\upshape G$_\FF$}\xspace}
\newcommand{\PLTL}{\textrm{\upshape PLTL}\xspace}
\newcommand{\arity}{\textrm{arity}}
\newcommand{\dom}{\textrm{dom}}
\newcommand{\sig}{\textrm{sig}}
\newcommand{\free}{\textrm{free}}
\newcommand{\gfv}{\textrm{gfv}}
\newcommand{\bind}{\textsf{BIND}}
\newcommand{\restrict}{\hspace*{-1.5pt}\mathbin{\upharpoonright}\hspace*{-1.5pt}}
\theoremstyle{plain}
\begin{document}

\title[Craig Interpolation for Decidable First-Order Fragments]{Craig Interpolation for Decidable First-Order Fragments}
\titlecomment{{\lsuper*}This is an extended version of a paper published at FoSSaCS 2024 \cite{tenCate2024craig}.}

\author[B.~ten Cate]{Balder ten Cate\lmcsorcid{0000-0002-2538-5846}}[a]
\author[J.~Comer]{Jesse Comer\lmcsorcid{0009-0006-9734-3457}}[b]

\address{ILLC, University of Amsterdam, Amsterdam 1098 XH, NL}
\email{b.d.tencate@uva.nl}  
\address{University of Pennsylvania, Philadelphia, PA 19104, USA}
\email{jacomer@seas.upenn.edu}  



\begin{abstract}
We show that the guarded-negation fragment is, in a precise sense, the smallest extension of the guarded fragment with Craig interpolation. In contrast, we show that full first-order logic is the smallest extension of both the two-variable fragment and the forward fragment with Craig interpolation. Similarly, we also show that all extensions of the two-variable fragment and of the fluted fragment with Craig interpolation are undecidable.
\end{abstract}

\maketitle

\section{Introduction}
\label{sec:intro}
The study of decidable fragments of first-order logic (\FO) is a topic with a long history, dating back to the early 1900s (\cite{Lowenheim1915,Skolem1920}, see~also~\cite{Borger1997:classic}), and more actively pursued since the 1990s. Inspired by Vardi~\cite{Vardi1996:why}, who asked ``what makes modal logic so robustly decidable?'' and Andreka et al.~\cite{Andreka1998:Modal}, who asked ``what makes modal logic tick?'' many decidable fragments have been introduced and studied over the last 25 years that take inspiration from modal logic (\ML), which itself can be viewed as a fragment of \FO that features a restricted form of quantification. These include the following fragments, each of which naturally generalizes modal logic in a different way: the \emph{two-variable fragment} (\FOtwo)~\cite{henkin1967logical, Mortimer1975:languages}, the \emph{guarded fragment} (\GFO)~\cite{Andreka1998:Modal}, and the \emph{unary negation fragment} (\UNFO)~\cite{tencate2013:unary}. Further decidable extensions of these fragments were subsequently identified, including the \emph{two-variable fragment with counting quantifiers} (\Ctwo)~\cite{Graedel97:two,pacholski1997complexity} and the \emph{guarded negation fragment} (\GNFO)~\cite{Barany2015:guarded}. The latter can be viewed as a common generalization of \GFO and \UNFO. Many decidable logics used in computer science and AI, including various description logics and rule-based languages, can be translated into \GNFO and/or \Ctwo. In this sense, \GNFO and \Ctwo are convenient tools for explaining the decidability of other logics. Extensions of \GNFO have been studied that push the decidability frontier even further (for instance with fixed-point operators \cite{benedikt2016step, Benedikt2019:definability} and using clique-guards), but these fall outside the scope of this paper.

In an earlier line of investigation, Quine identified the decidable \emph{fluted fragment} (\FL) \cite{quine1969}, the first of several \emph{ordered logics} which have been the subject of recent interest \cite{purdy1996-1,purdy1996-2,purdy1999,purdy2002,pratt-hartmann2019,bednarczyk2022,bednarczyk2023adjacent}. The idea behind ordered logics is to restrict the order in which variables are allowed to occur in atomic formulas and quantifiers. Another recently introduced decidable fragment that falls in this family is the \emph{forward fragment} (\FF), whose syntax strictly generalizes that of \FL. Both \FL and \FF have the finite model property 
(FMP) \cite[Theorem 4.10]{pratt-hartmann2019} and embed \ML \cite{hustadt2004}, but are incomparable in expressive power to \GFO \cite{pratt-hartmann2016}, \FOtwo, and \UNFO.\footnote{Specifically, the \FO-sentence $\exists x y (R(x,y) \land R(y,x))$ belongs to \GFO, \FOtwo and \UNFO, but is not expressible in \FF, since the structure consisting of two points with symmetric edges and the structure $(\mathbb{Z},S)$ with $S$ the successor relation, are ``infix bisimilar,'' as described in \cite{bednarczyk2022}.} The recently-introduced \emph{adjacent fragment} (\AF) generalizes both \FF and \FL while also retaining decidability~\cite{bednarczyk2023adjacent}.

Ideally, an \FO-fragment is not only decidable, but also model-theoretically well behaved. A particularly important model-theoretic property of logics is the \emph{Craig Interpolation Property} (CIP). It states that, for all formulas $\varphi, \psi$, if $\varphi \models \psi$, then there exists a formula $\vartheta$ such that $\varphi \models \vartheta$ and $\vartheta \models \psi$, and such that all non-logical symbols occurring in $\vartheta$ occur both in $\varphi$ and in $\psi$. Craig~\cite{craig1957linear,Craig1957three} proved in 1957 that \FO itself has this property (hence the name). Several refinements of Craig's result have subsequently been obtained (e.g.,~\cite{Otto2000:interpolation,benedikt2016generating}). These have found  applications in various areas of computer science and AI, including formal verification, modular hard/software specification and automated deduction~\cite{DBLP:reference/mc/McMillan18,CalEtAl20,DBLP:conf/aplas/HoderHKV12}, and are emerging as a new prominent technology in databases~\cite{DBLP:series/synthesis/2011Toman,benedikt2016generating} and knowledge representation~\cite{DBLP:conf/ijcai/LutzW11,tencate2013:beth,DBLP:conf/aaai/KoopmannS15}. While we have described CIP here as a model-theoretic property, it also has a proof-theoretic interpretation. Indeed, it has been argued that CIP is an indicator for the existence of nice proof systems~\cite{hooglandthesis}.

Turning our attention to the decidable fragments of \FO we mentioned earlier, it turns out that, although \GFO is in many ways model-theoretically well-behaved~\cite{Andreka1998:Modal}, it lacks CIP~\cite[Theorem~3.2]{Hoogland02:interpolation}. Likewise, \FOtwo lacks CIP~\cite[Corollary 4]{Comer1969:classes} and the same holds for \Ctwo  (\cite[Example~2]{Jung2021:living} yields a counterexample). Both \FF and \FL lack CIP~\cite[Theorem~9]{bednarczyk2022}. On the other hand, CIP is enjoyed by both \UNFO \cite[Theorem~3.10]{tencate2013:unary} and \GNFO \cite[Theorem~7]{Benedikt2013:rewriting}. Figure~\ref{fig:fragments} summarizes these known results. Note that we restrict attention to relational signatures without constant symbols and function symbols. Some of the results depend on this restriction. Other known results not reflected in Figure~\ref{fig:fragments} (to avoid clutter) are that the intersection of \GFO and \FOtwo (also known as \GFOtwo) has CIP~\cite[Theorem~6.1]{Hoogland02:interpolation}. Similarly, the intersection of \FF with \GFO and the intersection of \FL with \GFO (known as \GFF and \GFL, respectively) have CIP~\cite[Theorem~13]{bednarczyk2022}.

\begin{figure}[t]
\centering
\vspace{-2mm}
\newcommand{\yes}[1][\!\!\!\!\!]{\includegraphics[scale=.02]{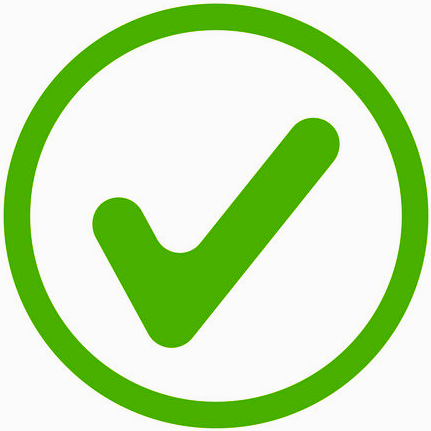}#1}
\newcommand{\no}[1][\!\!\!\!\!]{\includegraphics[scale=.02]{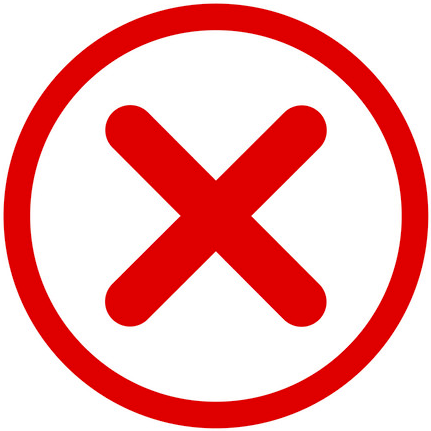}#1}
\begin{tikzpicture}
  [align=center,node distance=1.4cm, every node/.style={scale=1}]
    \node (fo) at (0,0) {\FO \yes};
    \node [below left of=fo] (ctwo) {\Ctwo \no};
    \node [below right of=fo] (gnfo) {\GNFO \yes};
    \node [below left  of=gnfo] (gfo)  {\GFO \no};
    \node [below right of=gnfo] (unfo) {\UNFO \yes};
    \node [below left of=ctwo] (twovar) {\FOtwo \no}; 
    \node [below right of= gfo] (ml) {Modal Logic \yes};
    \node [right of=gnfo] (ff) {FF \no}; 
    \node [right of=unfo] (fl) {FL \no}; 
    
    \draw [thick, shorten <=-2pt, shorten >=-2pt] (fo) -- (gnfo);
    \draw [thick, shorten <=-2pt, shorten >=-2pt] (fo) -- (ctwo);
    \draw [thick, shorten <=-2pt, shorten >=-2pt] (ctwo) -- (twovar);
    \draw [thick, shorten <=-2pt, shorten >=-2pt] (gnfo) -- node[below] {~~~(*)} (gfo);
    \draw [thick, shorten <=-2pt, shorten >=-2pt] (gnfo) -- (unfo);
    \draw [thick, shorten <=-2pt, shorten >=-2pt] (gfo) -- (ml);
    \draw [thick, shorten <=-2pt, shorten >=-2pt] (twovar) -- (ml);
    \draw [thick, shorten <=-2pt, shorten >=-2pt] (unfo) -- (ml);
      \draw [thick, shorten <=-2pt, shorten >=-2pt] (fo) -- (ff);
      \draw [thick, shorten <=-2pt, shorten >=-2pt] (ff) -- (fl);
      \draw [thick, shorten <=-2pt, shorten >=-2pt] (fl) -- (ml);
  
    \draw [dashed, thick] (-4,-1.5) -- 
      node[below, at start, sloped] {decidability} (2.5,0.1);
    \draw [dashed, thick] (-4,-2.4) -- node[below, at start, sloped] {finite model\\ property} (3,0.1);
\end{tikzpicture}

{
\scriptsize
\begin{tabular}[t]{l@{~~}l}
\FO & First-order logic\\
\FOtwo & Two-variable fragment \\
\Ctwo & Two-variable fragment  with counting \\
\GFO & Guarded fragment \\
\end{tabular} ~~
\begin{tabular}[t]{l@{~~}l}
\GNFO & Guarded-negation fragment \\
\UNFO & Unary-negation fragment \\
\FF & Forward fragment \\
\FL & Fluted fragment 
\end{tabular}
}

\caption{Landscape of decidable fragments of \FO with (\yes[]) and without (\no[]) CIP. \\ The inclusion marked $(*)$ holds only for sentences and self-guarded formulas.}
\label{fig:fragments}
\end{figure}

When a logic $L$ lacks CIP, the question naturally arises as to whether there exists a more expressive logic $L'$ that has CIP. If such an $L'$ exists, then, in particular, interpolants for valid $L$-implications can be found in $L'$. This line of analysis is sometimes referred to as \emph{Repairing Interpolation} \cite{Areces03:repairing}. If $L'$ is an \FO-fragment, and our aim is to repair interpolation by extension, then there is a trivial solution: \FO itself is an extension of $L$ satisfying CIP. We will instead consider the following refinement of the question: can a natural extension $L'$ of $L$ be identified which satisfies CIP while retaining decidability? We will answer this question for three of the fragments depicted in Figure~\ref{fig:fragments} that lack CIP, by identifying the minimal natural extension $L'$ of $L$ satisfying CIP. Our main results can be  stated informally as follows:
\begin{enumerate}
    \item The smallest logic extending \GFO that has CIP is \GNFO.
    \item The smallest logic extending \FOtwo that has CIP is \FO.
    \item The smallest logic extending \FF that has CIP is \FO.
    \item No decidable extensions of \FOtwo or \FL have CIP.
\end{enumerate}

\noindent 
The precise statements of these results will be given in the respective sections. They involve some natural closure assumptions on the logics in question, and, for the undecidability results, some assumptions regarding the effective computability of the translation between the extension and the logic that it extends.

These results give us a clear sense of where, in the larger landscape of decidable fragments of \FO, we may find logics that enjoy CIP. What makes the above results remarkable is that, from the definition of the Craig interpolation property, it does not appear to follow that a logic without CIP would have a unique minimal extension with CIP. Note that a valid implication may have many possible interpolants, and the Craig interpolation property merely requires the existence of one such interpolant.  Nevertheless, the above results show that, in the case \FOtwo, \GFO, and \FF, such a unique minimal extension indeed exists (assuming suitable closure properties, which will be spelled out in detail in the next sections).

\subsubsection*{Related Work.}
Several other approaches have been proposed for dealing with logics that lack CIP. One approach is to weaken CIP. For example, it was shown in \cite[Theorem~4.5]{Hoogland02:interpolation} that \GFO satisfies a weak, ``modal'' form of Craig interpolation, where, roughly speaking, only the relation symbols that occur in non-guard positions in the interpolant are required to occur both in the premise and the conclusion. As it turns out, this weakening of CIP is strong enough to entail the (non-projective) \emph{Beth Definability Property}, which is one important use case of CIP. See also Section~\ref{sec:conclusion} for further discussion of  weak forms of CIP. 

Another recent approach~\cite{Jung2021:living} is to develop algorithms for testing whether an interpolant exists for a given entailment. That is, rather than viewing Craig interpolation as a property of logics, the existence of interpolants is studied as an algorithmic problem at the level of individual entailments. The interpolant existence problem turns out to be indeed decidable (with higher complexity than the satisfiability problem) for both \GFO and \FOtwo~\cite[Theorem~1]{Jung2021:living}, although it is undecidable for \FOtwo with two equivalence relations~\cite[Theorem~5]{wolter2024interpolation}.

Additional results are known for \UNFO and \GNFO beyond the fact that they have CIP. In particular, CIP holds for the fixed-point extension of \UNFO, while the weak ``modal'' form of Craig interpolation, mentioned above, holds for the fixed-point extension of \GNFO~\cite{Benedikt2015:interpolation,Benedikt2019:definability}. Furthermore, interpolants for \UNFO and \GNFO can be constructed effectively, and tight bounds are known on the size of interpolants and the computational complexity of computing them~\cite{Benedikt2015:effective}.

Our paper can be viewed as an instance of \emph{abstract model theory} for fragments of \FO. One large driving force behind the development of abstract model theory was the identification of \emph{extensions} of \FO which satisfy desirable model-theoretic properties, such as compactness, L\"owenheim-Skolem, and Craig interpolation. One takeaway from this line of research is that CIP is scarce among many ``reasonable'' \FO-extensions. An early result of Lindstr\"om showed that \FO-extensions with finitely many generalized quantifiers and satisfying the downward L\"owenheim-Skolem property do not have the Beth property (and hence fail to satisfy CIP) \cite[Theorem~5]{lindstrom1969}. Similarly, Caicedo~\cite[Theorem~2.2]{caicedo1985}, generalizing an early result by Friedman \cite{friedman1973}, established a strong negative CIP result that applies to arbitrary proper \FO-extensions with monadic generalized quantifiers. For a survey of negative interpolation results among \FO-extensions, see \cite{vaananen2008}. These negative results not only show that CIP is scarce among extensions of \FO,  they also provide clues as to where, within the space of all extensions, one may hope to find logics with CIP. Our results can be viewed similarly, except that they pertain to (extensions of) fragments of \FO.

Our results can also be appreciated as characterizations of \GNFO and of \FO. While traditional Lindstr\"om-style characterizations are maximality theorems (e.g., \FO is a maximal logic having the compactness and L\"owenheim-Skolem properties), our results can be viewed as minimality theorems (e.g., \GNFO is the minimal logic extending \GFO and having CIP). 

Some prior work exists that studies abstract model theory  for (extensions of) fragments of \FO. Most closely related is \cite{tencate2005:interpolation}, which studies modal logics and hybrid logics. Among other things, it was shown in~\cite[Theorem~4.1]{tencate2005:interpolation} that the smallest extension of modal logic with the difference operator (\MLD) which satisfies CIP is full first-order logic. Additionally, in~\cite{gheerbrant2009craig}, the authors identified minimal extensions of various fragments of propositional linear temporal logic (\PLTL) with CIP. Furthermore, it was shown in~\cite[Corollary~5.1]{tencate2005:interpolation} that every abstract logic extending \GFO with CIP can express all \FO sentences and formulas with one free variable, and is thus undecidable. A crucial difference between this result and ours is that \cite{tencate2005:interpolation} assumes signatures with constant symbols and concerns a stronger version of CIP, interpolating not only over relation symbols but also over constant symbols. In contrast, we consider here only purely relational signatures without constant symbols. Other prior work on abstract model theory for  fragments of \FO are \cite{vanBenthem2007:lindstrom,vanbenthem2009:lindstrom,garciamatos2005}. Repairing interpolation has also been pursued in the context of quantified modal logics, which typically lack CIP; in \cite[Theorem~4]{Areces03:repairing}, the authors showed that CIP can be repaired for such logics by adding nominals, @-operators and the $\downarrow$-binder.

\subsubsection*{Outline.}
Section \ref{sec:prelim} introduces the abstract model-theoretic framework. In Sections \ref{sec:rep-fo2}, \ref{sec:rep-gfo}, and \ref{sec:rep-ff}, we repair interpolation for \FOtwo, \GFO, and \FF, respectively. In Section \ref{sec:weak-ext}, we provide results showing that, even with weak expressive assumptions, extensions of \FOtwo and \FL with CIP are undecidable. In Section \ref{sec:conclusion}, we discuss the implications and limitations of our results, and future directions.

\section{Preliminaries}
\label{sec:prelim}
We assume familiarity with structures, reducts, and expansions, as well as the syntax and semantics of \FO~\cite{enderton2001mathematical}. Signatures are denoted by $\sigma$ and $\tau$, and are assumed to be relational and finite. If $\varphi$ contains only relation symbols occurring in $\sigma$, then we write $M, g \models \varphi$ to denote that a $\sigma$-structure $M$ satisfies $\varphi$ under the variable assignment $g$. We write $x_i, y_i, z_i, u_i$ to denote variables, and $\overline{x}, \overline{y}, \overline{z}, \overline{u}$ to denote tuples of variables. We write $a_i,b_i,c_i$ to denote elements of structures and $\overline{a},\overline{b},\overline{c}$ to denote tuples of such elements. Given a tuple of elements $\overline{a} = a_1,\hdots,a_n$ in a structure $M$, a tuple of variables $\overline{x} = x_1,\hdots,x_n$, and a variable assignment $g$, we write $g[\overline{x} / \overline{a}]$ to denote the variable assignment which is the same as $g$ except that $g(x_i) = a_i$ for each $i \leq n$. In order to state our main results precisely, we must formally define what we mean by \emph{extensions $L'$ of $L$} (where $L$ is some fragment of \FO that lacks CIP). One option is to let $L'$ range over fragments of FO that syntactically include $L$. However, as it turns out, our main results apply even to extensions that are not themselves contained in \FO. We therefore opt, instead, to work with an abstract definition of logics, as typically used in abstract model theory.

\subsubsection*{Abstract Logics.}
An \emph{abstract logic} (or, simply, a \emph{logic}) is a pair $(L,\models_L)$, where $L$ is a map from relational signatures $\sigma$ to collections of \emph{formulas}, and $\models_L$ is a ternary \emph{satisfaction relation}. A \emph{formula} of an abstract logic $(L,\models_L)$ is an element of $L(\sigma)$ for some finite relational signature $\sigma$. $L$ must be monotone: if $\sigma \subseteq \tau$, then $L(\sigma) \subseteq L(\tau)$. Each formula $\varphi$ has an associated finite set of free variables $\free(\varphi)$, and we write $\varphi(\overline{x})$ or $\varphi(x_1,\dots,x_k)$ to denote that the free variables of $\varphi$ are exactly those in the tuple $\overline{x} = x_1,\hdots,x_k$. As in the case of \FO, a formula $\varphi$ is a \emph{sentence} if $\free(\varphi) = \emptyset$. We write $\sig(\varphi)$ to denote the least signature $\sigma$ such that $\varphi \in L(\sigma)$. The ternary \emph{satisfaction relation} $\models_L$ is defined over triples $(M,g,\varphi)$, where $\varphi$ is an $L$-formula, $M$ is a $\tau$-structure such that $\sig(\varphi) \subseteq \tau$, and $g$ is a variable assignment with $\free(\varphi) \subseteq \dom(g)$; we write $M, g \models_L \varphi$ if this relation holds between these objects. The notions of logical consequence and logical equivalence for abstract logics are defined completely analogously to \FO. In later sections, we will prefer to suppress the subscript $L$ in the notation for the satisfaction relation and write $L$ to denote an abstract logic $(L,\models_L)$. Furthermore, we often write $\varphi \in L$ rather than $\varphi \in L(\sigma)$, leaving the signature implicit. \looseness=-1

All abstract logics $L$ are assumed to satisfy the \emph{reduct property} and the \emph{renaming property}. The \emph{reduct property} states that if $\sigma \subseteq \tau$, then for all $\varphi \in L(\sigma)$, all $\tau$-structures $M$, and all assignments $g$, if $M,g \models_L \varphi$, then $M \restrict \sigma, g \models_L \varphi$, where $M \restrict \sigma$ denotes the $\sigma$-reduct of $M$. In other words, the truth of a formula of an abstract logic $L$ in a structure depends only on the interpretations of the symbols in the signature of that formula. The \emph{renaming property} states that if $\rho: \sigma \to \tau$ is an injective map preserving the arity of relation symbols, then for each formula $\varphi \in L(\sigma)$, there is a formula $\psi \in L(\tau)$ such that for all $\tau$-structures $M$, we have that $M, g \models_L \psi$ if and only if $\rho^{-1}[M], g \models_L \varphi$, where $\rho^{-1}[M]$ is the $\sigma$-structure with the same domain as $M$ where, for each $R \in \sigma$, we have that $R^{\rho^{-1}[M]} = \rho(R)^M$. Intuitively, the renaming property states that if a formula over a signature $\sigma$ can be expressed in a logic $L$, then the formula obtained by renaming all of its relation symbols can also be expressed in $L$. \looseness=-1

For an arbitrary abstract logic $L$, the Craig interpolation property states that if $\varphi \models_L \psi$ for $L$-formulas $\varphi$ and $\psi$, then there exists an $L(\sig(\varphi) \cap \sig(\psi))$-formula $\vartheta$ such that $\varphi \models_L \vartheta$ and $\vartheta \models_L \psi$, where $\free(\vartheta) = \free(\varphi) \cap \free(\psi)$. 

We say a formula $\varphi$ of a logic $L$ expresses a formula $\psi$ of a logic $L'$ if $\free(\varphi) = \free(\psi)$, $\sig(\varphi) = \sig(\psi)$, and for all structures $M$ and assignments $g$, we have that $M, g \models_L \varphi$ if and only if $M, g \models_{L'} \psi$. We say that a logic $L'$ is an \emph{extension} of a logic $L$ (notation: $L \preceq L'$) if $L'$ can express all formulas of $L$. An \FO-fragment can then be precisely defined, without reference to syntax, as a logic of which \FO is an extension. We say that $L'$ is a \emph{sentential extension} of $L$ (notation: $L \preceq_{sent} L'$) if $L'$ can express all sentences of $L$.

Let $L$ be a logic and $\psi(x_1,\hdots,x_n)$ be an $L$-formula. We write $\llbracket \psi \rrbracket^M$ for the collection of tuples $(a_1,\hdots,a_n) \in M^n$ such that there exists an assignment $g$ where $M, g \models \psi$ and $g(x_i) = a_i$ for each $i \leq n$. Given formulas $\psi_1,\hdots,\psi_k \in L(\sigma)$, a $\sigma$-structure $M$, and relation symbols $R_1,\hdots,R_k \in \sigma$ with $\lvert \free(\psi_i) \rvert = \arity(R_i)$ for each $i \leq k$, we define $M[R_1/\psi_1,\hdots,R_k/\psi_k]$ to be the $\sigma$-structure with the same domain as $M$ and such that $R_i^{M[R_1/\psi_1,\hdots,R_k/\psi_k]} = \llbracket \psi_i \rrbracket^M$ for each $i \leq k$. We now describe a syntax-free notion of uniform substitution for formulas of an abstract logic.

\begin{defi}
\label{def:abstract-sub}
Let $L$ be a logic and $\varphi \in L(\sigma)$ with $R_1,\hdots,R_k \in \sig(\varphi)$, where for each $i \leq k$, we have that $R_i$ is a $k_i$-ary relation symbol. Furthermore, let $\psi_1,\hdots,\psi_k \in L(\sigma)$ be formulas with $\lvert \free(\psi_i) \rvert = k_i$ for each $i \leq k$. We say that $L$ \emph{expresses the substitution of $\psi_1,\hdots,\psi_k$ for $R_1,\hdots,R_k$ in $\varphi$} if there exists a formula $\chi \in L(\sigma)$ such that, for every $\sigma$-structure $M$,
\[ M, g \models \chi \iff M[R_1/\psi_1,\hdots,R_k/\psi_k], g \models \varphi. \]
\end{defi}

Most studies in abstract logic assume that the logics under study are \emph{regular}, roughly meaning that they can express atomic formulas, Boolean connectives, and existential quantification. In other words, to study regular logics is to study extensions of \FO. Since we are interested in a more fine-grained view of logics including \FO-fragments, these assumptions are too strong. As a result, the first step of studying extensions of \FO-fragments from the perspective of abstract logic is to identify natural expressive assumptions for those extensions which are strictly weaker than regularity. We do this in the respective sections.

Some of our proofs will use second-order quantification (for expository reasons only), and we recall the semantics of these quantifiers here. Given a formula $\varphi \in L(\sigma \cup \{P\})$ of some abstract logic $L$, we can form new formulas $\exists P \varphi$ and $\forall P \varphi$ with signature $\sigma$ and the same free variables as $\varphi$. Given a $\sigma$-structure $M$ and an assignment $g$, the semantics of these formulas are defined as follows:
\begin{align*}
M,g \models \exists P \varphi ~~&\text{if there is a}~ (\sigma \cup\{P\})\text{-expansion}~ M' ~\text{of}~ M \\
&\text{such that}~ M', g \models \varphi, ~\text{and} \\
M,g \models \forall P \varphi ~~&\text{if for all}~ (\sigma \cup\{P\})\text{-expansions}~ M' ~\text{of}~ M, \\
&\text{we have that}~ M', g \models \varphi.
\end{align*}
If $L$ itself does not allow second-order quantification, we can view $\exists P \varphi$ and $\forall P \varphi$ as elements of $L'(\sigma)$ for a suitable extension $L'$ of $L$. In particular, if $\varphi$ is an \FO-formula, then $\exists P \varphi$ and $\forall P \varphi$ are formulas of second-order logic (\SO).

\section{Repairing Interpolation for \texorpdfstring{\FOtwo}{the Two-Variable Fragment}}
\label{sec:rep-fo2}
The two-variable fragment (\FOtwo) consists of all FO-formulas containing only two variables, say, $x$ and $y$, where we allow for nested quantifiers that reuse the same variable (as in $\exists xy (R(x,y)\land \exists x(R(y,x)))$, expressing the existence of a path of length 2). In this context, as is customary, we restrict attention to relations of arity at most $2$. It is known that \FOtwo is decidable~\cite{Mortimer1975:languages} but does not have CIP~\cite{Comer1969:classes}. \looseness=-1

\subsection{Natural Extensions of \texorpdfstring{\FOtwo}{the Two-Variable Fragment}}

While \FOtwo is restricted to only two variables and predicates of arity as most 2, it has no restriction on its connectives: it is fully closed under Boolean connectives and existential and universal quantification. Because of this fact, we will consider in this section those abstract logics which are \emph{strong extensions} of \FOtwo.

\begin{defi}
\label{def:strong-ext}
We say that a logic $L'$ \emph{strongly extends} a logic $L$ if $L'$ extends $L$ and, for each formula $\varphi \in L'$ with $R_1,\hdots,R_k \in \sig(\varphi)$, where $\varphi$ expresses some $\psi \in L$, and all formulas $\psi_1,\hdots,\psi_k \in L'$, we have that $L'$ expresses the substitution of $\psi_1,\hdots,\psi_k$ for $R_1,\hdots,R_k$ in $\varphi$ (cf. Definition \ref{def:abstract-sub}).
\end{defi}

Intuitively, Definition \ref{def:strong-ext} means that $L'$ can express uniform substitutions of its formulas into formulas of $L$. In other words, the notion of a strong extension is a syntax-free way to say that $L'$ extends $L$ and is closed under the connectives of $L$. In particular, if $L$ strongly extends \FOtwo, then $L$ can express all of the usual first-order connectives: if $\psi_0$ and $\psi_1$ are expressible in $L$, then $\lnot \psi_0$, $\psi_0 \land \psi_1$, and $\exists x \psi_0$ are also expressible in $L$, under the usual semantics of these connectives. Clearly \Ctwo is a strong extension of \FOtwo, and \FOtwo is the smallest strong extension of itself.

\subsection{Finding the Minimal Extension of \texorpdfstring{\FOtwo}{the Two-Variable Fragment} with CIP}

Recall that we write $L \preceq_{sent} L'$ if every sentence of $L$ is expressible in $L'$. Our main result in this section is the following.

\begin{thm}
\label{thm:FOtwo-main}
If $L$ is a strong extension of \FOtwo with CIP, then $\FO \preceq_{sent} L$.
\end{thm}
\begin{proof}
We will show by induction on the complexity of formulas that, for every $\FO$-formula $\varphi(x_1 \ldots, x_n)$ there is a sentence $\psi\in L$ over an extended signature containing additional unary predicates $P_1, \ldots, P_n$, that is equivalent to 
\[ \exists x_1\ldots x_n(\big(\!\!\!\bigwedge_{i=1\ldots n}\!\!\! P_i(x_i)\land\forall y(P_i(y)\to y=x_i)\big)\land\varphi(x_1, \ldots, x_n)). \]

In other words, $\psi$ is a sentence expressing that $\varphi$ holds under an assignment of its free variables to some tuple of elements which uniquely satisfy the $P_i$ predicates. In the case that $n=0$ (i.e., the case that $\varphi$ is a sentence), we then have that $\psi$ is equivalent to $\varphi$, which shows that $FO\preceq_{sent} L$.

The base case of the induction follows from the fact that we restrict attention to relations of arity at most 2. The induction step for the Boolean connectives is straightforward as well (using the fact that $L$ is a strong extension of \FOtwo, and thus can express all connectives of \FOtwo). In fact, the only non-trivial part of the argument is the induction step for the existential quantifier. Let $\varphi(x_1,\ldots,x_n)$ be of the form $\exists x_{n+1} \varphi'(x_1.\ldots,x_n,x_{n+1})$. By the inductive hypothesis, there is an $L$-sentence $\psi$ with $\sig(\psi) = \sig(\varphi') \cup \{P_1,\hdots,P_{n+1}\}$, where $P_1,\hdots,P_{n+1}$ are unary predicates not in $\sig(\varphi')$, which is equivalent to
\[ \exists x_1\ldots x_n\exists x_{n+1}(\big(\!\!\!\bigwedge_{i \leq n+1}\!\!\! P_i(x_i)\land\forall y(P_i(y)\to y=x_i)\big)\land\varphi'(x_1, \ldots, x_n,x_{n+1})). \] 
Now, let $\psi'$ be obtained from $\psi$ by replacing every occurrence of $P_{n+1}$ by $P'$ for some fresh unary predicate $P'$; this is expressible in $L$ by the renaming property. Furthermore, let
\[ \begin{array}{ll} 
\gamma(x) &:= \psi \land P_{n+1}(x), ~\text{and}\\[2mm]
\chi(x)   &:= (P'(x)\land\forall y(P'(y)\to y=x)) \to \psi'.
\end{array} \]
(where $x$ is either of the two variables we have at our disposal; it does not matter which). Since $L$ strongly extends \FOtwo, both can be written as an $L$-formula. Then
\[ \gamma(x) \models \chi(x). \]
Let $\vartheta(x)\in L$ be an interpolant. Observe that since $P_{n+1}$  occurs only in $\gamma(x)$ and $P'$ only in $\chi(x)$, the following second-order entailment is also valid:
\[ \exists P_{n+1}\gamma(x) \models \vartheta(x) \models \forall P' \chi(x). \]
It is not hard to see that $\exists P_{n+1} \gamma(x)$ and $\forall P' \chi(x)$ are equivalent. Indeed, both are satisfied in a structure $M$ under an assignment $g$ precisely if $M',g\models\varphi$, where $M'$ is the expansion of $M$ in which $P_{n+1}$ denotes the singleton set $\{g(x_{n+1})\}$. It then follows that $\vartheta(x)$, being sandwiched between the two, is also equivalent to $\exists P_{n+1} \gamma(x)$. This implies that $\vartheta(x)$ is the unique interpolant (up to logical equivalence) of the entailment $\gamma(x) \models \chi(x)$, and so it is expressible in $L$. Then since $L$ strongly extends \FOtwo, it can express $\exists x \vartheta(x)$. We claim that this sentence satisfies the requirement of our claim. To see this, observe that $\exists x \vartheta(x)$ is equivalent to $\exists x \exists P_{n+1} \gamma(x)$, which is equivalent to $\exists P_{n+1} \psi$, which clearly satisfies the requirement of our claim.
\end{proof}

\section{Repairing Interpolation for \texorpdfstring{\GFO}{the Guarded Fragment}}
\label{sec:rep-gfo}
The guarded fragment (\GFO) \cite{Andreka1998:Modal} allows formulas in which all quantifiers are ``guarded.'' Formally, a \emph{guard} for a formula $\varphi$ is an atomic formula $\alpha$ whose free variables include all free variables of $\varphi$. Following~\cite{Graedel99:restraining}, we allow $\alpha$ to be an equality. More generally, by an \emph{$\exists$-guard} for $\varphi$, we mean a formula of the form $\exists \overline{x} \beta$ whose free variables include all free variables of $\varphi$, where $\beta$ is an atomic formula. The formulas of \GFO are generated by the following grammar:
\[
\varphi := \top \mid R(\overline{x}) \mid x = y \mid \varphi \land \psi \mid \varphi \lor \psi \mid \lnot \varphi \mid \exists \overline{x} (\alpha \land \varphi),
\]
where, in the last clause, $\alpha$ is a guard for $\varphi$. Note again that we do not allow constants and function symbols.

In the guarded-negation fragment (\GNFO) \cite{Barany2015:guarded}, arbitrary existential quantification is allowed, but every negation is required to be guarded. More precisely, the formulas of \GNFO are generated by the following grammar:
\[ \varphi := \top \mid R(\overline{x}) \mid x = y \mid \varphi \land \varphi \mid \varphi \lor \varphi \mid \exists x \varphi \mid \alpha \land \lnot \varphi, \]
where, in the last clause, $\alpha$ is a guard for $\varphi$.

As is customary, the above definitions are phrased in terms of ordinary guards $\alpha$. However, it is easy to see that if we allow for $\exists$-guards, this would not affect the expressive power (or computational complexity) of these logics in any way. This is because, when the variables in the tuple $\overline{x}$ do not occur free in $\varphi$, as is the case when $\exists \overline{x} \beta$ is an $\exists$-guard for $\varphi$, then we can write $\exists\overline{x}\beta\land \varphi$ equivalently as $\exists\overline{x}(\beta\land\varphi)$. In other words, an $\exists$-guard is as good as an ordinary guard. We call an \FO-formula \emph{self-guarded} if it is of the form $\alpha\land\varphi$ where $\alpha$ is an $\exists$-guard for $\varphi$. Note, in particular, that every sentence $\varphi$ is trivially self-guarded, since it can be equivalently written as $\exists x (x=x) \land \varphi$.

In this section, we will require the notions of \emph{conjunctive queries} (CQs) and \emph{unions of conjunctive queries} (UCQs). A CQ is an \FO-formula of the form
\[ \varphi(x_1, \dots, x_n) := \exists y_1 \dots \exists y_m (\bigwedge_{i \in I} \alpha_i), \]
where each $\alpha_i$ is an atomic formula, possibly an equality, whose free variables are among $\{x_1, \dots, x_n, y_1, \dots, y_m\}$. The collection of all CQs is expressively equivalent to the fragment \FOxc of first-order logic, which is generated by the following grammar:
\begin{equation}
\label{eq:FOxc-grammar}
\varphi := R(x_1, \dots, x_k) \mid x = y \mid \varphi \land \varphi \mid \exists x \varphi.
\end{equation}

A UCQ is a finite disjunction of CQs. Importantly, \GNFO can be alternatively characterized as the smallest logic which can express every UCQ and is closed under guarded negation~\cite{Barany2015:guarded}. This is made explicit in the following expressively equivalent grammar for \GNFO:
\begin{equation}
\label{eq:GNFO-alt-grammar}
\varphi := \top \mid R(\overline{x}) \mid x = y \mid \alpha \land \lnot \varphi \mid q[R_1 / \varphi_1, \dots, R_n / \varphi_n],
\end{equation}
where $q$ is a UCQ with relation symbols $R_1, \dots, R_n$ and $\varphi_1, \dots, \varphi_n$ are self-guarded formulas with the appropriate number of free variables and generated by the same recursive grammar. We refer to this as the UCQ syntax for \GNFO.

\subsection{Natural Extensions of \texorpdfstring{\GFO}{the Guarded Fragment}}

Unlike \FOtwo, guarded fragments are peculiar in that they are not closed under substitution. For example, $\exists xy(R(x,y)\land \neg S(x,y))$ belongs to \GFO, but if we substitute $x=x \land y=y$ for $R(x,y)$, we obtain $\exists xy(x=x\land y=y\land\neg S(x,y))$, which does not belong to \GFO (and is not even expressible in \GNFO). \GFO and \GNFO are, however, closed under \emph{self-guarded substitution}: we can uniformly substitute self-guarded formulas for atomic relations. We generalize the notion of a self-guarded formula to abstract logics $L$ as follows: a formula $\varphi(\overline{x}) \in L(\sigma)$ with $\free(\varphi) = \{x_1,\hdots,x_k\}$ is \emph{self-guarded} if there is a $n$-ary relation symbol $G \in \sigma$, where $n \geq k$, and a tuple of variables $\overline{y}$ containing exactly the variables $\free(\varphi) \cup \{z_1,\hdots,z_m\}$, such that for all $\sigma$-structures $M$ and assignments $g$,
\[ M, g \models \varphi \quad\text{implies}\quad M, g \models \exists z_1 \hdots \exists z_m G(\overline{y}). \]
Intuitively, we can think of a self-guarded $L$-formula as a conjunction of the form $\alpha \land \psi$, where $\alpha$ is an $\exists$-guard for $\psi$. We can then capture the notion of self-guarded substitution for abstract logics by the following definition.

\begin{defi}
\label{def:sg-sub}
We say that an abstract logic $L$ \emph{expresses self-guarded substitutions} if, for each formula $\varphi \in L$ with $R_1,\hdots,R_k \in \sig(\varphi)$, and all self-guarded formulas $\psi_1,\hdots,\psi_k \in L$, we have that $L$ can express the substitution of $\psi_1,\hdots,\psi_k$ for $R_1,\hdots,R_k$ in $\varphi$ (see Definition~\ref{def:abstract-sub}).
\end{defi}

It was shown in~\cite{Barany2015:guarded} that every self-guarded \GFO-formula is expressible in \GNFO. In particular, this applies to all \GFO-sentences and \GFO-formulas with at most one free variable (since all such formulas can be equivalently written as $x=x \land \varphi$). It is therefore common to treat \GNFO as an extension of \GFO. To make this precise, we say that $L'$ is a \emph{self-guarded extension} of $L$ if $L'$ can express all \emph{self-guarded} formulas of $L$ (notation: $L \preceq_{sg} L'$). In Figure~\ref{fig:fragments}, the line marked (*) indicates that \GNFO extends \GFO in this weaker sense. Furthermore, it is worth noting that \GNFO is also not closed under implication, while \GFO is. If it were, then \GNFO would be able to express full negation (using formulas of the form $\varphi \to \bot$). However, \GFO and \GNFO both have disjunction and conjunction in common. We formalize all of these considerations into the following notion.

\begin{defi}
A \emph{guarded logic} is a logic $L$ such that
\begin{enumerate}
\item $\GFO\preceq_{sg} L$, 
\item $L$ expresses self-guarded substitutions, and
\item $L$ expresses conjunction and disjunction.
\end{enumerate}
\end{defi}
\noindent 
Clearly, \GFO and \GNFO are both guarded logics. The \emph{loosely-guarded fragment}, originally introduced in \cite{van1997dynamic}, is a guarded logic as long as it is defined so as to allow for equality guards, as in \cite{Gradel1999decision}. Additional guarded logics include the \emph{clique-guarded fragment} \cite{Gradel1999decision} and the expressively equivalent \emph{packed fragment} \cite{marx2001tolerance}. Furthermore, observe that the \emph{smallest} guarded logic consists of all positive Boolean combinations of self-guarded formulas of \GFO.

\subsection{Finding the Minimal Extension of \texorpdfstring{\GFO}{the Guarded Fragment} with CIP}

Our main result in this section is the following.

\begin{restatable}{thm}{thmmain}
\label{thm:main}
Let $L$ be a guarded logic with CIP. Then $\GNFO \preceq L$.
\end{restatable}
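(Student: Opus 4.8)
The goal is to show that any guarded logic $L$ with CIP extends \GNFO, meaning $L$ can express every \GNFO-formula. The plan is to proceed by induction on the UCQ syntax for \GNFO, which builds formulas from three operations: atomic relations (and equalities), guarded negation $\alpha \land \lnot\varphi$, and UCQ-substitution $q[\varphi_1/R_1,\dots,\varphi_n/R_n]$ where the $\varphi_i$ are self-guarded. Atomic formulas are already expressible in the smallest guarded logic (they are self-guarded \GFO-formulas), and closure under conjunction and disjunction is given directly by the definition of a guarded logic. So the real work lies in the inductive steps for guarded negation and for UCQ-substitution.

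For guarded negation, suppose inductively that $L$ expresses some $\varphi$; I want $L$ to express $\alpha \land \lnot\varphi$ where $\alpha$ is a guard for $\varphi$. The key observation is that $\alpha \land \lnot\varphi$ is itself a self-guarded formula (with guard $\alpha$). The natural strategy, mirroring the \FOtwo argument of Theorem~\ref{thm:FOtwo-main}, is to set up a valid entailment whose interpolant, extracted via CIP, yields the desired formula over the correct (smaller) signature. One would build a premise and a conclusion each mentioning a fresh auxiliary predicate that pins down the negated content of $\varphi$, arrange that the relevant symbols appear on only one side so that the second-order quantifications collapse to the intended semantics, and then invoke CIP; the interpolant, sandwiched between an $\exists P$ and a $\forall P$ form that are provably equivalent, is forced to express $\alpha \land \lnot\varphi$. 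Here the guardedness of $\alpha$ and the closure under self-guarded substitution are what keep all intermediate formulas inside $L$.

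For UCQ-substitution, I would use the fact (stated in the excerpt) that $L$ is a guarded logic, so it already expresses all self-guarded \GFO-formulas; in particular, since every UCQ is expressible and UCQs are among the self-guarded \GFO-formulas when suitably guarded, $L$ expresses the UCQ $q$ itself. Then, because $L$ \emph{expresses self-guarded substitutions} (clause 2 of the definition) and each $\varphi_i$ is self-guarded and expressible in $L$ by the induction hypothesis, $L$ can substitute the $\varphi_i$ for the $R_i$ in $q$, yielding the required formula. This step should be comparatively routine once the self-guardedness bookkeeping is in place: it is essentially a direct appeal to the closure properties built into the definition of a guarded logic, with care taken that the $\varphi_i$ genuinely meet the abstract self-guardedness condition so that Definition~\ref{def:sg-sub} applies.

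I expect the guarded-negation step to be the main obstacle. Unlike the \FOtwo setting, where the ambient logic is closed under \emph{arbitrary} substitution and full Boolean structure, here I only have self-guarded substitution and conjunction/disjunction available, so the interpolation gadget must be constructed entirely out of self-guarded pieces. The delicate points will be (i) choosing the auxiliary predicate and its guard so that the premise and conclusion are self-guarded and hence live in $L$, (ii) verifying that the two second-order closures $\exists P(\cdots)$ and $\forall P(\cdots)$ are semantically equivalent (so the interpolant is forced to express exactly $\alpha \land \lnot\varphi$ rather than merely something between them), and (iii) ensuring the interpolant's signature is exactly $\sig(\alpha) \cup \sig(\varphi)$ as required, which depends on arranging that the fresh predicates occur on only one side of the entailment. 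Getting the guard $\alpha$ to carry through the construction, so that the resulting interpolant is itself self-guarded and thus usable in subsequent induction steps, is the crux.
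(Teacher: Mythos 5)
There is a genuine gap, and it lies exactly in the step you dismiss as ``comparatively routine.'' Your UCQ-substitution case rests on the claim that ``UCQs are among the self-guarded \GFO-formulas when suitably guarded,'' so that $L$ expresses the query $q$ for free. This is false, and it is precisely the gap between \GFO and \GNFO that the theorem is about: a CQ with unguarded existential quantification, e.g.\ $\exists y_1 y_2\,(R(x,y_1)\land R(y_1,y_2)\land R(y_2,x))$, is not a \GFO-formula and is not expressible in \GFO at all, so it does not come with the definition of a guarded logic. Establishing that $L$ expresses arbitrary CQs is where CIP must actually be deployed, and it is the technical heart of the paper's proof: Proposition~\ref{prop:CQs} shows by induction on \FOxc-formulas that $L$ expresses each one, handling the (unguarded) existential step by constructing an entailment $\gamma(\overline{y})\models\chi(\overline{y})$ out of guarded material --- using the $\bind$ transformation and fresh unary predicates to simulate the free variables --- whose interpolant is forced, by a sandwich between $\exists G\,\gamma$ and $\forall\overline{P}\,\chi$, to be equivalent to $\exists x\,\psi$ (Lemma~\ref{BINDexpLemma}). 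Without this, your induction on the UCQ syntax cannot get started.

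Conversely, the guarded-negation step that you identify as the main obstacle and propose to attack with an interpolation gadget needs no interpolation at all. Since $L$ expresses all self-guarded \GFO-formulas, it expresses $\alpha\land\lnot R(\overline{x})$ with $\free(\alpha)=\free(R(\overline{x}))$; if $\varphi$ is expressible in $L$ with $\free(\varphi)\subseteq\free(\alpha)$, then $\alpha\land\varphi$ is self-guarded, and self-guarded substitution of it for $R$ yields $\alpha\land\lnot(\alpha\land\varphi)\equiv\alpha\land\lnot\varphi$. The same trick gives unary implications $P(x)\to\varphi$ via disjunction, which is one of the closure properties Proposition~\ref{prop:CQs} needs as input. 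So your proposal has the difficulty inverted: the negation case is a two-line consequence of the closure assumptions, while the positive-existential case is where the CIP hypothesis does all the work.
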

\noindent 
In other words, loosely speaking, \GNFO is the smallest extension of \GFO with CIP. It is based on similar ideas as the proof of Theorem~\ref{thm:FOtwo-main}, but the argument is more intricate. The main thrust of the argument will be to show that our abstract logic $L$ can express all positive existential formulas (Equation \ref{eq:FOxc-grammar}), from which it will follow easily that $L$ is able to express all formulas in the UCQ syntax for \GNFO (Equation \ref{eq:GNFO-alt-grammar}). Toward this end, the main technical result is the following proposition.

\begin{restatable}{prop}{propCQs}
\label{prop:CQs}
Let $L$ be a logic with CIP that can express atomic formulas, guarded quantification, conjunction, and unary implication. Then $\FOxc \preceq L$.
\end{restatable}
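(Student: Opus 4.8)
The plan is to prove by induction on the structure of \FOxc-formulas that each one is expressible in $L$, following the template of the proof of Theorem~\ref{thm:FOtwo-main}. The cases of atomic formulas and of conjunction are immediate from the assumptions that $L$ expresses atomic formulas and conjunction, so the entire content is the existential case: assuming, by the induction hypothesis, that $\varphi(y,\overline x)$ is expressible in $L$, one must express $\exists y\,\varphi(y,\overline x)$. The quantifier here is \emph{unguarded}, whereas $L$ only offers guarded quantification, and as in the \FOtwo proof the device for bridging this gap is to phrase the quantifier as a second-order quantifier over a fresh auxiliary symbol and then to eliminate that symbol by interpolation.

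Concretely, I would introduce a fresh relation symbol $G$ of arity $|\overline x|+1$ to serve as a guard and set $\gamma(\overline x):=\exists y\,(G(\overline x,y)\land\varphi(y,\overline x))$, which lies in $L$ by guarded quantification and conjunction; since $G$ is fresh, $\exists G\,\gamma\equiv\exists y\,\varphi$. The goal is then to produce a companion formula $\chi(\overline x)\in L$, over a renamed copy $G'$ of $G$, with $\gamma\models\chi$ and $\forall G'\,\chi\equiv\exists y\,\varphi$. Given such a $\chi$, an interpolant $\vartheta(\overline x)$ for $\gamma\models\chi$ lies in $L(\sig(\varphi))$ because $G$ occurs only in $\gamma$ and $G'$ only in $\chi$; and the resulting chain $\exists G\,\gamma\models\vartheta\models\forall G'\,\chi$, whose two outer formulas both compute $\exists y\,\varphi$, forces $\vartheta\equiv\exists y\,\varphi$. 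This is exactly the second-order squeeze $\exists P_{n+1}\gamma\models\vartheta\models\forall P'\chi$ from the proof of Theorem~\ref{thm:FOtwo-main}, now carried out one level up, over a guard relation rather than a marking predicate.

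The step I expect to be the main obstacle---and the reason the argument is ``more intricate'' than that of Theorem~\ref{thm:FOtwo-main}---is the construction of the companion $\chi$ within the sparse syntax at hand. The naive $\chi$, whose universal closure $\forall G'\,\chi$ evaluates to $\exists y\,\varphi$, would need an antecedent expressing that the $\overline x$-section of $G'$ exhausts the domain; but guarded logic can express neither such ``for all elements'' statements nor counting assertions like ``$P$ denotes a singleton,'' so the singleton-marking trick that powers the \FOtwo proof is simply not available. Overcoming this is the crux. I would do so by reducing, via fresh unary predicates tagging the components of $\overline x$, to the situation of a single free variable---so that the implications involved have a \emph{unary} antecedent, matching the hypothesis of unary implication, and the final re-quantification is a lone existential that can be guarded by the equality $x=x$---and by exploiting the monotonicity of positive-existential formulas in these auxiliary predicates in place of the (inexpressible) uniqueness conditions, so that the second-order quantifier collapses in the right direction. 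Reintroducing the tagged free variables by a further round of interpolation then yields $\exists y\,\varphi$, and hence, by the induction, $\FOxc\preceq L$.
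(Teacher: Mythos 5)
Your plan is, at the level of strategy, the paper's plan: the induction with trivial atomic and conjunction cases, the guarded formula $\gamma(\overline x):=\exists y(G(\overline x,y)\land\varphi)$ with $\exists G\,\gamma\equiv\exists y\,\varphi$, a companion $\chi$ whose universal second-order closure also computes $\exists y\,\varphi$, the resulting squeeze forcing the interpolant to be equivalent to $\exists y\,\varphi$, and even the correct diagnosis that singleton/exhaustiveness conditions are inexpressible here and must be replaced by monotonicity of positive-existential formulas in fresh unary tags, with unary implications supplying the antecedent and $x=x$ guarding the final lone existential. All of that matches the paper.

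The gap is the mechanism of the ``reduction to a single free variable,'' which you leave as a gesture and which is the actual content of the proof. Merely conjoining tags $P_i(y_i)$ to $\psi(x,\overline y)$ does not remove the $y_i$ from the free variables, so it does not yield a formula whose outer existential can be guarded by $x=x$. The paper's device, the transformation $\bind_{\overline y\mapsto\overline P}$, rewrites $\psi$ \emph{atom by atom}: each atom $\alpha$ is replaced by $\exists\overline y'\,(\alpha\land\bigwedge_{y_i\in\free(\alpha)}P_i(y_i))$, where $\overline y'$ lists the variables of $\overline y$ occurring in $\alpha$. This genuinely eliminates $\overline y$ from the free variables (each local $\exists\overline y'$ is guarded by $\alpha$ itself), is entailed by $\psi$ under the hypothesis $\bigwedge_iP_i(y_i)$ by monotonicity (Proposition~\ref{impLemma}), and collapses back to $\psi$ under $\forall\overline P(\bigwedge_iP_i(y_i)\to\cdot)$ by instantiating each $P_i$ as the singleton $\{g(y_i)\}$ (Proposition~\ref{eqLemma}). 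Moreover---and this is the second thing your sketch elides---showing that $L$ can \emph{express} $\bind_{\overline y\mapsto\overline P}(\psi)$ is itself a separate induction (Lemma~\ref{BINDexpLemma}) whose existential case needs its own interpolation step, using a second family of tags $\overline Q$ for the remaining free variables and the composition law $\bind_{\overline x\overline y\mapsto\overline Q\overline P}(\psi)\equiv\bind_{\overline x\mapsto\overline Q}(\bind_{\overline y\mapsto\overline P}(\psi))$ of Proposition~\ref{BINDcomp}. So the argument is two nested interpolation inductions, not one induction with ``a further round of interpolation'' appended; the step you yourself identify as the crux is exactly the step you have not supplied.
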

\noindent 
Here, we say that a logic $L$ can \emph{express guarded quantification} if, whenever $\varphi\in L$ and $\alpha$ is a guard for $\varphi$, $L$ can express $\exists \overline{x} (\alpha \land \varphi)$; we say that $L$ can \emph{express unary implications} if, whenever $\varphi\in L$ and $\alpha$ is an atomic formula with only one free variable, $L$ can express~$\alpha \to \varphi$. 

The following definition is used in the proof of Proposition~\ref{prop:CQs}.

\begin{defi}
\label{def:bind}
Let $\varphi$ be a formula in \FOxc, let $\overline{y} = y_1, \dots, y_n$ be a tuple of distinct variables, and let $\overline{P}=P_1, \dots, P_n$ be a tuple of unary predicates  of the same length. Then $\bind_{\overline{y}\mapsto \overline{P}}(\varphi)$ is the \FOxc formula given by the following recursive definition:
\begin{enumerate}
\item $\bind_{\overline{y}\mapsto \overline{P}}(\alpha) = \exists \overline{y} \left( \alpha \land \bigwedge_{1 \leq i \leq n : y_i \in \free(\alpha)} P_i(y_i) \right)$, where $\alpha$ is an atomic formula (possibly an equality),
\item $\bind_{\overline{y}\mapsto \overline{P}}(\phi \land \psi) = \bind_{\overline{y}\mapsto \overline{P}}(\phi) \land \bind_{\overline{y}\mapsto \overline{P}}(\psi)$, and
\item $\bind_{\overline{y}\mapsto \overline{P}}(\exists z \psi) = \exists z (\bind_{\overline{y}\mapsto \overline{P}}(\psi))$.
\end{enumerate}
\end{defi}
\noindent 
The $\bind_{\overline{y}\to \overline{P}}$ operation applied to a formula $\varphi \in \FOxc$ wraps each atomic subformula of $\varphi$ with quantifiers for the variables in $\overline{y}$, and adds additional unary predicates for these variables. Our use of the word ``BIND'' is justified by the observation that the free variables of $\bind_{\overline{y} \mapsto \overline{P}}(\varphi)$, for $\overline{y}=y_1,\dots,y_n$, are exactly~$\free(\varphi)\setminus\{y_1, \dots, y_n\}$. The utility of this definition is due to the following fact: for any $\varphi \in \FOxc$, whenever $M, g \models \bind_{\overline{y}\to \overline{P}}(\varphi)$, and the interpretation in $M$ of each unary predicate $P_i$ in $\overline{P}$ is a singleton, then $M, g' \models \varphi$, where $g'$ is the extension of $g$ which maps each $y_i$ to the unique element satisfying $P_i$ (see Propositions \ref{impLemma}, \ref{eqLemma}). The following proposition is a simple consequence of the definition of $\bind$. Note that we write $\varphi \equiv \psi$ to indicate that $\varphi$ and $\psi$ are logically equivalent.

\begin{prop}
\label{BINDcomp}
For all $\FOxc$-formulas $\varphi$ and for all $\overline{x}, \overline{y}$ and $\overline{P},\overline{Q}$, if $\overline{x}$ and $\overline{y}$ are disjoint tuples of distinct variables, and $\overline{P}$ and $\overline{Q}$ are tuples of unary predicates of the same length as $\overline{x}$ and $\overline{y}$, respectively, then $\bind_{\overline{x}\overline{y} \mapsto \overline{P}\overline{Q}}(\varphi) \equiv \bind_{\overline{x} \mapsto \overline{P}}(\bind_{\overline{y} \mapsto \overline{Q}}(\varphi))$.
\end{prop}
\noindent 
A formula $\varphi$ is \emph{clean} if no free variable of $\varphi$ also occurs bound in $\varphi$, and $\varphi$ does not contain two quantifiers for the same variable. Every \FO-formula is equivalent to a clean \FO-formula, and all subformulas of a clean formula are also clean. We now prove two technical propositions.

\begin{prop}
\label{impLemma}
For every clean \FOxc-formula $\varphi$, for every tuple of distinct variables $\overline{y}=y_1, \dots, y_n$ (with each $y_i\in \free(\varphi)$), and for every tuple of unary predicates $\overline{P}=P_1,\ldots,P_n$, we have that
\[
\big(\bigwedge_{i=1,\hdots,n} P_i(y_i)\big)\models \varphi  \to \bind_{\overline{y} \mapsto \overline{P}}(\varphi).
\]
\end{prop}
\begin{proof}
The proof is by induction on the complexity of $\varphi$. For the base case, suppose that $\varphi$ is an atomic formula. Then if $M,g \models \bigwedge_{i=1\ldots n} P_i(y_i)$ and $M,g \models \varphi$, then it is immediate from the definition of $\bind$ that $M,g \models \bind_{\overline{y} \mapsto \overline{P}}(\varphi)$. For the inductive step, suppose that, for every model $M$ and variable assignment $g$, if $M,g\models \bigwedge_{i=1\ldots n} P_i(y_i)$ and $M,g\models\varphi$ then $M,g\models \bind_{\overline{y} \mapsto \overline{P}}(\varphi)$. The case of conjunction is trivial by the inductive hypothesis and the fact that $\bind$ commutes with conjunction. Finally, suppose that $\varphi := \exists z \psi$. By the inductive hypothesis, we have that
\[ \big(\bigwedge_{i=1,\hdots,n} P_i(y_i)\big)\models \psi  \to \bind_{\overline{y} \mapsto \overline{P}}(\psi). \]
Suppose that $M,g \models \bigwedge_{i=1,\hdots,n} P_i(y_i)$ and $M,g \models \exists z \psi$. Then there exists some $a \in M$ such that $M,g[z/a] \models \psi$, and so by the inductive assumption, $M,g[z/a] \models \bind_{\overline{y} \mapsto \overline{P}}(\psi)$. Then because $\varphi$ is clean, we have that $z \not \in \{y_1,\hdots,y_n\}$, and so $M,g \models \exists z\bind_{\overline{y} \mapsto \overline{P}}(\psi)$.
\end{proof}

\begin{prop}
\label{eqLemma}
For every clean \FOxc-formula $\varphi(x,\overline{y})$ with $\overline{y}=y_1, \ldots, y_n$ distinct from $x$, and for every $n$-tuple of unary predicates
$\overline{P}=P_1, \ldots, P_n$ not occurring in $\varphi$, we have that
\[ \exists x\varphi(x,\overline{y}) \equiv \forall \overline{P}\Big(\big(\bigwedge_{i=1\ldots n} P_i(y_i)\big) \to \exists x \bind_{\overline{y} \mapsto \overline{P}}(\varphi(x,\overline{y}))\Big). \]
\end{prop}
\begin{proof} For the left-to-right direction, suppose that $M,g \models \exists x \varphi(x,\overline{y})$ and $M'$ is an expansion of $M$ such that $M' \models \bigwedge_{i=1 \hdots n} P_i(y_i)$. Then $M',g[x/b] \models \varphi(x,\overline{y}) \land \bigwedge_{i=1\ldots n} P_i(y_i)$ for some $b\in M$. Then, by Proposition~\ref{impLemma}, we have
\[ M', g[x/b] \models \bind_{\overline{y} \mapsto \overline{P}}(\varphi(x,\overline{y})), \]
and hence
$M',g\models \exists x\bind_{\overline{y}\mapsto\overline{P}}(\varphi(x,\overline{y}))$.

For the reverse direction, suppose that
\[ M,g \models \forall \overline{P}(\bigwedge_i P_i(y_i) \to \exists x \bind_{\overline{y} \mapsto \overline{P}}(\varphi(x,\overline{y}))). \]
Let $M'$ be the expansion
of the structure $M$ in which each unary predicate symbol $P_i$ is interpreted as $\{g(y_i)\}$. Then, we have that $M',g\models \exists x \bind_{\overline{y} \mapsto \overline{P}}(\varphi(x,\overline{y}))$, and hence $M',g[x/b] \models \bind_{\overline{y} \mapsto \overline{P}}(\varphi(x,\overline{y}))$ for some $b \in M$. 
To complete the proof, it suffices to show that
$M',g[x/b]\models \varphi(x,\overline{y})$ (since 
this implies that also $M,g[x/b]\models\varphi(x,\overline{y})$).

For every subformula containing a bound occurrence of a variable $y_i \in \overline{y}$, we have that every witness for that variable $y_i$ must also be in $P_i$ (by construction of $\bind_{\overline{y} \mapsto \overline{P}}(\varphi(x,\overline{y}))$ and the assumption that $\varphi(x,\overline{y})$ is clean). Since each $P_i$ is a singleton, this implies that each witness for $y_i$ in each subformula is $g(y_i)$. It follows that $M', g[x/b] \models \alpha$ for each atomic formula $\alpha$ occurring in $\varphi(x,\overline{y})$. By a simple subformula induction, we then obtain that $M', g[x/b] \models \varphi(x,\overline{y})$, completing the proof.
\end{proof}

\noindent The following lemma enables the proof of Proposition~\ref{prop:CQs}.

\begin{lem}
\label{BINDexpLemma}
Let $L$ be an \FO-fragment which can express atomic formulas and is closed under guarded quantification, conjunction, and unary implication. If $L$ can express a formula $\varphi \in \FOxc$ and all of its subformulas, then for all tuples $\overline{y}$ of variables, we have that $L$ can express $\bind_{\overline{y} \mapsto \overline{P}}(\varphi)$.
\end{lem}

\begin{proof}
We proceed by strong induction on the complexity of clean \FOxc-formulas $\varphi$.

\textbf{Base Case.} Suppose $\varphi$ is an atomic formula. Fix an arbitrary tuple $\overline{y}=y_1 \ldots, y_n$.~Then
\[
\bind_{\overline{y} \mapsto \overline{P}}(\varphi) = \exists \overline{y} \left( \alpha \land \bigwedge_{1 \leq i \leq n : y_i \in \free(\alpha)} P_i(y_i) \right),
\]
which $L$ can express by closure under conjunction and guarded quantification.

\textbf{Inductive Step.} Suppose inductively that, for all formulas $\psi$ of lesser complexity than $\varphi$, and all tuples $\overline{z}$ of variables, we have that $L$ can express $\bind_{\overline{z} \mapsto \overline{P}}(\psi)$. Fix an arbitrary tuple $\overline{y}$ of variables.

Suppose that $\varphi = \psi_1 \land \psi_2$. Since $L$ can express $\varphi$ and all of its subformulas, it can also express $\psi_1$, $\psi_2$, and all of their subformulas. Then by the inductive hypothesis, $L$ can express $\bind_{\overline{y} \mapsto \overline{P}}(\psi_1)$ and $\bind_{\overline{y} \mapsto \overline{P}}(\psi_2)$. Then by closure under conjunctions, $L$ can express $\bind_{\overline{y} \mapsto \overline{P}}(\psi_1) \land \bind_{\overline{y} \mapsto \overline{P}}(\psi_2)$, which is the same as $\bind_{\overline{y} \mapsto \overline{P}}(\varphi)$ (see Definition \ref{def:bind}).

\hspace*{1pt} \\
Now suppose that $\varphi(\overline{x},\overline{y}) = \exists z \psi(\overline{x},\overline{y},z)$, where the (possibly empty) tuple $\overline{x}$ consists of all free variables of $\varphi$ not in the tuple $\overline{y}$. We need to show that $L$ can express $\bind_{\overline{y} \mapsto \overline{P}}(\varphi(\overline{x},\overline{y}))$, which is the same as $\exists z (\bind_{\overline{y} \mapsto \overline{P}}(\psi(\overline{x},\overline{y},z)))$ (see Definition \ref{def:bind}). Since $L$ can express $\varphi$ and all of its subformulas, it can also express $\psi$ and all of its subformulas. Then, by the inductive hypothesis, 
$L$ can express $\bind_{\overline{y} \mapsto \overline{P}}(\psi)$, whose free variables are those in the tuple $\overline{x}z$, as well as $\bind_{\overline{x}\overline{y} \mapsto \overline{Q}\overline{P}}(\psi)$, whose only free variable is $z$, where $\overline{Q}$ and $\overline{P}$ are distinct tuples of fresh unary predicates with $\overline{Q}$ the length of $\overline{x}$ and $\overline{P}$ the length of $\overline{y}$. Since $L$ is closed under conjunction and guarded quantification, it follows that $L$ can express
\[\gamma(\overline{x}) :=\exists z (G(\overline{x},z) \land \bind_{\overline{y} \mapsto \overline{P}}(\psi))
\quad\text{and}\quad
\exists z (z=z \land \bind_{\overline{x}\overline{y} \mapsto \overline{Q}\overline{P}}(\psi)),\]
where $G$ is a fresh relation symbol not occurring in $\psi$. Then by closure under unary implications, we have that $L$ can also express
\[ \chi(\overline{x}) := \big(\bigwedge_{i} Q_i(x_i)\big) \to \exists z (z=z \land \bind_{\overline{x}\overline{y} \mapsto \overline{Q}\overline{P}}(\psi)). \]

\medskip\par\noindent\textbf{Claim: }
$\gamma(\overline{x})\models \chi(\overline{x})$

\medskip\par\noindent\emph{Proof of claim:}
By Proposition \ref{BINDcomp}, 
\begin{equation}
\bind_{\overline{x}\overline{y} \mapsto \overline{Q}\overline{P}}(\psi) \equiv \bind_{\overline{x} \mapsto \overline{Q}}(\bind_{\overline{y} \mapsto \overline{P}}(\psi)).
\label{eq:bind-bind}
\end{equation} 
Then by applying Proposition \ref{impLemma} and inverting the hypotheses, we have
\[ \bind_{\overline{y} \mapsto \overline{P}}(\psi) \models \big(\bigwedge_{i} Q_i(x_i)\big) \to  \bind_{\overline{x}\overline{y} \mapsto \overline{Q}\overline{P}}(\psi). \]
From this, it  follows (because $z$ is distinct from $x_i$ variables) that
\[ \exists z(\bind_{\overline{y} \mapsto \overline{P}}(\psi)) \models \big(\bigwedge_{i} Q_i(x_i)\big) \to \exists z\bind_{\overline{x}\overline{y} \mapsto \overline{Q}\overline{P}}(\psi), \]
and therefore $\gamma(\overline{x})\models \chi(\overline{x})$. This concludes the proof of the claim.

\medskip
Since $L$ can express both $\gamma(\overline{x})$ and $\chi(\overline{x})$, we have by the Craig interpolation property that $L$ can express some Craig interpolant $\vartheta(\overline{x})$. Since $G$ and the $Q_i$ predicates do not occur in $\varphi$, they do not occur in $\vartheta(\overline{x})$, and therefore, the following second-order implication is valid:
\[ \exists G \gamma(\overline{x}) \models \vartheta(\overline{x}) \models \forall \overline{Q} \chi(\overline{x}). \]

It is easy to see that $\exists G \gamma(\overline{x}) \equiv \exists z \bind_{\overline{y} \mapsto \overline{P}}(\psi)$. Similarly, it follows from Proposition \ref{eqLemma} and Equation (\ref{eq:bind-bind}) that $\forall \overline{Q} \chi(\overline{x}) \equiv \exists z \bind_{\overline{y} \mapsto \overline{P}}(\psi)$.
Therefore, $\vartheta(\overline{x}) \equiv \exists z \bind_{\overline{y} \mapsto \overline{P}}(\psi)$. In particular,
$\exists z \bind_{\overline{y} \mapsto \overline{P}}(\psi)$ is expressible in $L$.
\end{proof}

\noindent We are now ready to prove Proposition~\ref{prop:CQs}, restated below.

\propCQs*

\begin{proof} By strong induction on the complexity of \FOxc-formulas. The base case is immediate, since $L$ can express all atomic formulas. For the inductive step, if $\varphi := \psi_1 \land \psi_2$, then by the inductive hypothesis, $L$ can express $\psi_1$ and $\psi_2$, and so by closure under conjunction, $L$ can express $\varphi$. Now suppose $\varphi(\overline{y}) := \exists x(\psi(x,\overline{y}))$.
By the inductive hypothesis, together with closure
under guarded quantification, $L$ can express
\[\gamma(\overline{y}) := \exists x (G(x, \overline{y}) \land \psi).\]
Furthermore, by Lemma \ref{BINDexpLemma}, $L$ can express $\bind_{\overline{y} \mapsto \overline{P}}(\psi)$, and therefore, by closure under guarded quantification and unary implications, $L$ can express \[\chi(\overline{y}) := \big(\bigwedge_{i} P_i(y_i)\big) \to \exists x(x=x\land \bind_{\overline{y} \mapsto \overline{P}}(\psi)).\] 

\medskip\par\noindent\textbf{Claim: } 
$\gamma(\overline{y}) \models \chi(\overline{y})$.

\medskip\par\noindent\emph{Proof of claim:}
It is clear that $\gamma(\overline{y})\models \exists x \psi$.
Furthermore, by Proposition \ref{impLemma},
$\psi\models\big(\bigwedge_{i} P_i(y_i)\big) \mapsto \bind_{\overline{y} \mapsto \overline{P}}(\psi)$, from which it follows that
$\exists x \psi \models\chi(\overline{y})$
(since the variable $x$ is distinct from $y_1, \ldots, y_n$).
Therefore, $\gamma(\overline{y}) \models \chi(\overline{y})$.
\medskip

Let $\vartheta(\overline{y})$ be an interpolant for $\gamma(\overline{y}) \models \chi(\overline{y})$ in $L$.
Since $G$ and the predicates in $\overline{P}$ do not occur in $\psi$, the following second-order entailments are valid:
\[
\exists G \exists x (G(x, \overline{y}) \land \psi) \models \vartheta(\overline{y}) \models \forall \overline{P} ((\bigwedge_{i} P_i(y_i)) \to \exists x \bind_{\overline{y} \mapsto \overline{P}}(\psi)).
\]
It is easy to see that $\exists G \exists x (G(x, \overline{y}) \land \psi) \equiv \exists x \psi$. Furthermore, by Proposition \ref{eqLemma},
\[
\psi \equiv \forall \overline{P} ((\bigwedge_{i} P_i(y_i)) \to \bind_{\overline{y} \mapsto \overline{P}}(\psi)).
\]
from which it follows (since $x$ is distinct from $y_1, \ldots, y_n$) that
\[
\exists x \psi \equiv \forall \overline{P} ((\bigwedge_{i} P_i(y_i)) \to \exists x \bind_{\overline{y} \mapsto \overline{P}}(\psi)).
\]
Therefore, $\vartheta(\overline{y}) \equiv \varphi(\overline{y})$, and so we are done.
\end{proof}

\noindent Our main result follows easily from Proposition \ref{prop:CQs}, the closure properties of guarded logics, and the UCQ characterization of \GNFO (Equation \ref{eq:GNFO-alt-grammar}).

\thmmain*

\begin{proof}
$L$ can express self-guarded \GFO-formulas, so it can express formulas of the form $\exists \overline{x} \beta$, where $\beta$ is an atomic formula. Then since $L$ can express self-guarded substitution, $L$ can express guarded quantification. Furthermore, $L$ can express all self-guarded formulas of the form $\alpha \land \lnot \beta$, where $\alpha$ and $\beta$ are atomic formulas such that $\free(\alpha) = \free(\beta)$. Furthermore, for every formula $\varphi$ expressible in $L$ with $\free(\varphi) \subseteq \free(\alpha)$, $\alpha \land \varphi$ is a self-guarded formula. Thus by expressibility of self-guarded substitution, $L$ can also express $\alpha \land \lnot (\alpha \land \varphi)$, which is equivalent to $\alpha \land \lnot \varphi$; hence $L$ can express guarded negation. If $L$ can express $\varphi$, then by expressibility of guarded negation and disjunction, it can also express the formula $(x=x \land \lnot P(x)) \lor \varphi$, which is equivalent to $P(x) \to \varphi$. Hence $L$ can express unary implications. Therefore, by Proposition \ref{prop:CQs}, $L$ can express all formulas in \FOxc. Then by expressibility of disjunction, $L$ can express all unions of conjunctive queries. The result then follows immediately from the UCQ-syntax for \GNFO (Equation \ref{eq:GNFO-alt-grammar}), by closure under self-guarded substitution.
\end{proof}

\section{Repairing Interpolation for \texorpdfstring{\FF}{the Forward Fragment}}
\label{sec:rep-ff}
The fluted fragment (\FL) \cite{quine1969} is an ordered logic, in which all occurrences of variables in atomic formulas and quantifiers must follow a fixed order. In the context of ordered logics, we assume a fixed infinite sequence of variables $X = \langle x_i \rangle_{i \in \mathbb{Z}^+}$. A \emph{suffix $n$-atom} is an atomic formula of the form $R(x_j,\dots,x_n)$, where $x_j, \dots, x_n$ is a finite contiguous subsequence of $X$. \FL is defined by the following recursion.
\looseness=-1

\begin{defi}
For each $n \in \mathbb{N}$, define collections of formulas $\FL^n$ as follows:
\begin{enumerate}
\item $\FL^n$ contains all suffix $n$-atoms,
\item $\FL^n$ is closed under Boolean combinations, and
\item If $\varphi$ is in $\FL^{n+1}$, then $\exists x_{n+1} \varphi$ and $\forall x_{n+1} \varphi$ are in $\FL^n$.
\end{enumerate}
We set $\FL = \bigcup_{n \in \mathbb{N}} \FL^n$.
\end{defi}

The forward fragment (\FF), introduced in \cite{bednarczyk2021}, is a syntactic generalization of \FL. We say that $R(x_j,\dots,x_k)$ is an \emph{infix $n$-atom} if $x_j, \dots, x_n$ is a finite contiguous subsequence of $X$ and $k \leq n$. \FF is defined by the following recursion.

\begin{defi}
For each $n \in \mathbb{N}$, define collections of formulas $\FF^n$ as follows:
\begin{enumerate}
\item $\FF^n$ contains all infix $n$-atoms,
\item $\FF^n$ is closed under Boolean combinations, and
\item If $\varphi$ is in $\FF^{n+1}$, then $\exists x_{n+1} \varphi$ and $\forall x_{n+1} \varphi$ are in $\FF^n$.
\end{enumerate}
We set $\FF = \bigcup_{n \in \mathbb{N}} \FF^n$.
\end{defi}

In contrast to the other logics we have seen, \FL and \FF do not allow the primitive equality symbol. It can be seen by a simple formula induction that every formula in $\FF^k$ can be expressed by a formula in $\FF^n$ for every $n > k$; it follows easily that \FF can express arbitrary Boolean combinations of its formulas. However, \FL cannot: $P(x_1)$ and $P(x_2)$ are in $\FL$, but $P(x_1) \land P(x_2)$ is not expressible in \FL. Although \FF contains formulas which are not in \FL, it is known that \FF and \FL are expressively equivalent at the level of sentences~\cite{bednarczyk2022}. Furthermore, the satisfiability problems for \FL and \FF are decidable~\cite{purdy1996-2,bednarczyk2022}.

\subsection{Natural Extensions of \texorpdfstring{\FF}{the Fluted Fragment}}
Given a formula $\varphi$, we write $\gfv(\varphi)$ to denote the greatest $n \in \mathbb{Z}^+$ such that $x_n$ occurs free in $\varphi$; if $\varphi$ is a sentence, then we set $\gfv(\varphi) = 0$. We define \emph{forward logics} to capture the notion of a natural extension of \FF.

\begin{defi}
A \emph{forward logic} is an abstract logic $L$ such that
\begin{enumerate}
\item $L$ can express all infix $n$-atoms for every $n \in \mathbb{Z}^+$,
\item $L$ can express all Boolean combinations of its formulas, and
\item $L$ can express $\exists x_n \varphi$ and $\forall x_n \varphi$ whenever $L$ can express $\varphi$ and $n = \gfv(\varphi)$.
\end{enumerate}
\end{defi}
\noindent 
We refer to the last property of a forward logic as \emph{expressibility of ordered quantification}. Clearly \FF is a forward logic, and every forward logic extends \FF. Aside from \FF itself, the recently-introduced \emph{adjacent fragment} (\AF) syntactically generalizes both \FL and \FF~\cite{bednarczyk2023adjacent}, and is itself a forward logic in our sense. 

\subsection{Finding the Minimal Extension of \texorpdfstring{\FF}{the Fluted Fragment} with CIP}
Unlike the other fragments we have seen, one peculiar property of \FF is that the logic is not closed under variable substitutions. This can be seen simply by considering relational atoms: for a 3-ary relational symbol $R$, the formula $R(x_1,x_2,x_3)$ is in \FF, but the formula $R(x_3,x_1,x_2)$ is not. Before proving our main theorem, we prove the following lemma asserting that whenever a formula is expressible in a forward logic $L$ satisfying CIP, the result of making arbitrary substitutions for the free variables of the formula is also expressible in $L$.

\begin{lem}
\label{lemma:ff-shuffle}
Let $L$ be a forward logic satisfying CIP, and let $\varphi(x_{i_1}, \dots, x_{i_k})$ be a formula of first-order logic expressible in $L$, where $x_{i_1}, \dots, x_{i_k}$ is not necessarily a contiguous subsequence of variables. Then for every map
\[
\pi: \{i_1,\dots,i_k\} \to \mathbb{Z}^+,
\]
we have that $L$ can also express $\varphi(x_{\pi(i_1)}, \dots, x_{\pi(i_k)})$. In other words, $L$ is closed under renamings of free variables.
\end{lem}
\begin{proof}
For brevity, let $\overline{x} = x_{i_1}, \dots, x_{i_k}$, and let $\pi(\overline{x}) = x_{\pi(i_1)}, \dots, x_{\pi(i_k)}$. Without loss of generality, assume that $i_1 \leq \dots \leq i_k$ (we can do this since the notation $\varphi(x_{i_1}, \dots, x_{i_k})$ only indicates that the variables occur free, but says nothing about where or in what order they occur in the formula). Since $L$ can express $\varphi(\overline{x})$, it can evidently express the following formulas, by the definition of a forward logic:
\begin{align*}
\gamma(\overline{x}) &:= \bigwedge_{m \leq k} G_m(x_{i_m}) \land \forall x_{i_1} \dots \forall x_{i_k} \left( \bigwedge_{m \leq k} G_m(x_{\pi(i_m)}) \to \varphi(\overline{x}) \right) \\
\chi(\overline{x}) &:= \bigwedge_{m \leq k} P_m(x_{i_m}) \to \exists x_{i_1} \dots \exists x_{i_k} \left( \varphi(\overline{x}) \land \bigwedge_{m \leq k} P_m(x_{\pi(i_m)}) \right)
\end{align*}
Clearly $\gamma \models \chi$, and so there exists an interpolant $\vartheta$. Hence
\[
\exists G_1 \hdots G_k \gamma \models \vartheta \models \forall P_1 \hdots P_k \chi
\]
is a valid second-order entailment. Furthermore, it is easy to see that
\[
\exists G_1 \hdots G_k \gamma \equiv \forall P_1 \hdots P_k \chi \equiv \varphi.
\]
Therefore $\vartheta$ is equivalent to $\varphi(x_{\pi(i_1)}, \dots, x_{\pi(i_k)})$ and is expressible in $L$.
\end{proof}

\noindent We now prove our main theorem, which follows easily from Lemma \ref{lemma:ff-shuffle}.

\begin{thm}
Let $L$ be a forward logic satisfying CIP. Then $\FO \preceq L$.
\end{thm}
\begin{proof}
We proceed by induction on the complexity of \FO-formulas $\varphi$. For the base case, clearly $L$ can express all atomic \FO-formulas by applying Lemma \ref{lemma:ff-shuffle} to an appropriate infix atom. For the inductive step, the Boolean cases are immediate since $L$ can express all Boolean combinations. Hence the only interesting case is when $\varphi = \exists x_k \psi$ for some formula $\psi$. By the inductive hypothesis, $L$ can express $\psi$. Applying Lemma \ref{lemma:ff-shuffle}, $L$ can also express $\varphi'$, the result of substituting $x_{n+1}$ for all free occurrences of $x_k$, where $n = \gfv(\varphi)$, and leaving all other free variables the same. Then by expressibility of ordered quantification, $L$ can express $\exists x_{n+1} \varphi'$, which is equivalent to $\varphi$.
\end{proof}

\section{Undecidability of Extensions of \texorpdfstring{\FOtwo}{the Two-Variable Fragment} and \texorpdfstring{\FL}{the Fluted Fragment} with CIP}
\label{sec:weak-ext}
In Section \ref{sec:rep-fo2}, we showed that every strong extension of \FOtwo with CIP can express all sentences of \FO, and in Section \ref{sec:rep-ff}, we showed that every forward logic with CIP can express all formulas of \FO. These results suggest the undecidability of the satisfiability problems for such logics. In this section, we formalize this idea, showing that extensions of \FOtwo and \FL with CIP and satisfying very limited expressive assumptions are undecidable. These results rely primarily on known results on the undecidability of \FOtwo and \FL with additional transitive relations.

\begin{prop}
\label{prop:almost-trans}
Every abstract logic $L$ with CIP extending \FOtwo or \FL can express the following formulas:
\begin{align*}
\psi_0(x_1) &:= \forall x_2 \forall x_3 (R(x_1,x_2) \land R(x_2,x_3) \to R(x_1,x_3)), ~\text{and} \\
\psi_1 &:= \lnot \forall x_1 \forall x_2 \forall x_3 (R(x_1,x_2) \land R(x_2,x_3) \to R(x_1,x_3)).
\end{align*}
\end{prop}
\begin{proof}
We will obtain $\psi_0$ and $\psi_1$ as the unique interpolants of valid entailments between formulas of \FOtwo and \FL. Toward this end, consider the following formulas of \FL:
\begin{align*}
\gamma_0(x_1) &:= \forall x_2 (R(x_1,x_2) \to \forall x_3 (R(x_2,x_3) \to G(x_3)) \land \forall x_2 (G(x_2) \to R(x_1,x_2))), \\
\gamma_1(x_1) &:= \exists x_2 (R(x_1,x_2) \land \exists x_3 (R(x_2,x_3) \land P(x_3))) \to \exists x_2 (R(x_1,x_2) \land P(x_2)).
\end{align*}
We also define the following sentences of \FL:
\begin{align*}
\delta_0 := &\forall x_1 ( G_1(x_1) \to \forall x_2 (G_2(x_2) \to \lnot R(x_1,x_2))) \\
&\hspace*{25pt} \land \exists x_1 (G_1(x_1) \land \exists x_2 (R(x_1,x_2) \land \exists x_3 (R(x_2,x_3) \land G_2(x_3)))), \\
\delta_1 := &\exists x_1 ( \forall x_2 (P(x_2) \leftrightarrow \lnot R(x_1,x_2)) \to \\
&\hspace*{25pt} \exists x_2 ( R(x_1,x_2) \land \exists x_3 (R(x_2,x_3) \land P(x_3)))).
\end{align*}
With simple variable substitutions, we can turn $\gamma_0$ and $\gamma_1$ into formulas of \FOtwo:
\begin{align*}
\gamma'_0(x) &:= \forall y (R(x,y) \to \forall x (R(y,x) \to G(x)) \land \forall y (G(y) \to R(x,y))), \\
\gamma'_1(x) &:= \exists y (R(x,y) \land \exists x (R(y,x) \land P(x))) \to \exists y (R(x,y) \land P(y)).
\end{align*}
Similarly, we can turn $\delta_0$ and $\delta_1$ into sentences of \FOtwo:
\begin{align*}
\delta'_0 := &\forall x ( G_1(x) \to \forall y (G_2(y) \to \lnot R(x,y))) \\
&\hspace*{25pt} \land \exists x (G_1(x) \land \exists y (R(x,y) \land \exists x (R(y,x) \land G_2(x)))), \\
\delta'_1 := &\exists x ( \forall y (P(x) \leftrightarrow \lnot R(x,y)) \to \\
&\hspace*{25pt} \exists y ( R(x,y) \land \exists x (R(y,x) \land P(x)))).
\end{align*}
It is a simple exercise in \FO-semantics to see that $\gamma_0 \models \gamma_1$ and $\delta_0 \models \delta_1$. Since we assumed that $L$ has CIP, the following claims complete the argument.

\begin{clm}
$\psi_0$ is the unique interpolant of $\gamma_0$ and $\gamma_1$.
\end{clm}
\begin{proof}
Let $\vartheta(x_1)$ be an $L$-interpolant, so that $\gamma_0 \models \vartheta \models \gamma_1$; then $\exists G \gamma_0 \models \vartheta \models \forall P \gamma_1$ is also a valid second-order entailment. To complete the proof, we need to show that $\exists S \gamma_0 \equiv \psi_0$ and $\forall P \gamma_1 \equiv \psi_0$. To see that $\psi_0 \models \exists S \gamma_0$, observe that if $M,g \models \psi_0$, then $M',g \models \gamma_0$, where $M'$ is the expansion of $M$ in which $S$ is interpreted as the singleton set $\{g(x_0)\}$. If $M,g \models \forall P \gamma_1$, then $M',g \models \gamma_1$, where $M'$ is the expansion of $M$ in which $S$ is interpreted as the singleton set $\{g(x_0)\}$; hence $M',g \models \psi_0$, and so $M,g \models \psi_0$. Thus we have that $\psi_0 \models \exists S \gamma_0 \models \vartheta \models \forall P \gamma_1 \models \psi_0$; in particular, we have that $\vartheta \equiv \psi_0$, completing the~proof.
\end{proof}

\begin{clm}
$\psi_1$ is the unique interpolant of $\delta_0$ and $\delta_1$.
\end{clm}
\begin{proof}
Let $\vartheta(x_1)$ be an $L$-interpolant, so that $\delta_0 \models \vartheta \models \delta_1$; then $\exists G_1 G_2 \delta_0 \models \vartheta \models \forall P \delta_1$ is also a valid second-order entailment. To complete the proof, we need to show that $\exists G_1 \exists G_2 \delta_0 \equiv \psi_1$ and $\forall P \delta_1 \equiv \psi_0$. To see that $\psi_1 \models \exists G_1 \exists G_2 \delta_0$, observe that if that $M,g \models \psi_1$, then $M',g \models \delta_0$, where $M'$ is the expansion of $M$ in which $G_1$ and $G_2$ are interpreted as singleton sets containing the witnesses for the first-order existential quantifiers in $\psi_1$. If $M, g \models \forall P \delta_1$, then $M', g \models \delta_1$, where $M'$ is the expansion of $M$ in which $P$ is interpreted as the set of elements in $a \in \dom(M)$ such that $R^M(g(x_1),a)$ does not hold. It then follows easily that $M \models \psi_1$.
\end{proof}
\noindent This concludes the proof.
\end{proof}

We also need two additional definitions. First, an \emph{effective translation} from a logic $L$ to a logic $L'$ is a computable function which takes formula of $\varphi \in L$ as input and outputs an equivalent formula $\varphi' \in L'$. Second, we say that a logic $L$ has \emph{effective conjunction} if there is a computable function taking formulas $\varphi, \psi \in L$ as input and outputting a formula $\chi \in L$ which is equivalent to $\varphi \land \psi$.

\begin{thm}
\label{thm:ff-weak-und}
Let $L$ be an extension of \FL which satisfies CIP. Suppose further that there is an effective translation from \FL to $L$, and $L$ has effective conjunction. The satisfiability problem for $L$ is undecidable if either
\begin{enumerate}
\item $L$ can express ordered quantification, or
\item $L$ can express negation.
\end{enumerate}
\end{thm}
\begin{proof}
Let $\chi$ be the sentence asserting the transitivity of the relation $R$. Since $L$ has CIP and extends \FL, it can express both $\psi_0(x_1)$ and $\psi_1$ by Proposition \ref{prop:almost-trans}. If $L$ can express ordered quantification, it can express $\forall x_1 \psi_0(x_1)$, which is equivalent to $\chi$. If $L$ can express negation, then it can express $\lnot \psi_1$, which is also equivalent to $\chi$. As an abstract logic, $L$ can express $\chi$ and is closed under predicate renamings, so it can express that any number of binary relations are transitive. Let $\chi_1$, $\chi_2$, and $\chi_3$ be sentences expressing transitivity of binary relation symbols $R_1$, $R_2$, and $R_3$, respectively. Let $tr$ be an effective translation from \FL to $L$. Then a formula $\varphi$ of \FL with three designated transitive relations is satisfiable if and only if $tr(\varphi) \land \chi_1 \land \chi_2 \land \chi_3$ is satisfiable. Since $tr$ is computable and $L$ is effectively closed under conjunction, this reduction is computable. Satisfiability of \FL with three transitive relations is undecidable \cite{pratt-hartmann2022}, and so satisfiability of $L$ is undecidable.
\end{proof}

Note that notions of ``effective ordered quantification'' and ``effective negation,'' analogous to ``effective conjunction,'' are not required in the preceding proof, since the expressibility of those connectives is needed only to guarantee that transitivity of relations can be expressed.

Furthermore, it is also known that satisfiability is undecidable for $FO^2$-formulas with two transitive relations~\cite{Kieronski2005Results}. Using this fact, along with Proposition \ref{prop:almost-trans}, we obtain the following theorem, by a similar proof to that of Theorem \ref{thm:ff-weak-und}. Note again that we do not require a notion of ``effective universal quantification'' in the following proof.
\begin{thm}
\label{thm:fo2-weak-und}
Let $L$ be an extension of \FOtwo which satisfies CIP. Suppose further that there is an effective translation from \FOtwo to $L$, and $L$ has effective conjunction. The satisfiability problem for $L$ is undecidable if either
\begin{enumerate}
\item $L$ can express universal quantification, or
\item $L$ can express negation.
\end{enumerate}
\end{thm}
\noindent 
We remark that all forward logics and strong extensions of \FOtwo with CIP, assuming appropriate effective translations and effective conjunction, meet the requirements of Theorems \ref{thm:ff-weak-und} and \ref{thm:fo2-weak-und}, and hence are undecidable.

\section{Discussion}
\label{sec:conclusion}
In the introduction, we mentioned several results indicating the failure of CIP among many natural proper extensions of \FO. In \cite{vanbenthem2008}, van Benthem points out that there is a similar scarcity among \FO-fragments as well. Our results in Sections \ref{sec:rep-fo2} and \ref{sec:rep-ff} may be interpreted as additional confirmation of this observation. Furthermore, one tends to study proper fragments of \FO for their desirable computational properties, and so our broader undecidability results show that CIP fails for large swaths of \emph{decidable} \FO-fragments. However, there are a few notable fragments for which the determination of a minimal extension satisfying CIP is still open, such as \FL and some of the quantifier prefix fragments.

One limitation of our methodology and results is their dependence on a definition of Craig interpolation which mandates the existence of interpolants between proper \emph{formulas}, while many practical applications only require CIP for \emph{sentences}. Throughout this paper, we have established expressibility of a formula $\vartheta$ in a logic $L$ by induction (and by constructing two formulas $\varphi$ and $\psi$ such that $\varphi \models \psi$ and arguing that every interpolant is equivalent to $\vartheta$). In general, this method is difficult to apply unless free variables are allowed; it is not clear how to apply this type of inductive argument if we were only concerned with the existence of interpolants for sentences of the logic.

There are several well-studied properties strictly weaker than CIP. The $\Delta$-interpolation property (also known as Suslin-Kleene interpolation) holds for a logic $L$ if, whenever $\varphi \models \psi$, and (intuitively speaking) there is only one possible interpolant $\vartheta$ up to logical equivalence for this entailment, then $L$ contains a formula equivalent to $\vartheta$~\cite{barwise2017}. It is not hard to see that, unlike the Craig interpolation property, every logic $L$ has a unique minimal extension, denoted $\Delta(L)$, satisfying the $\Delta$-interpolation property. In fact, in our proofs we only rely on $\Delta$-interpolation; every application of the assumption that some abstract logic $L$ satisfies CIP yields a provably unique interpolant, up to logical equivalence. Therefore, all of our results hold also when CIP is replaced by $\Delta$-interpolation.

Two additional weakenings of CIP are the projective, non-projective, and weak Beth definability properties. The \emph{projective Beth property} states, roughly, that whenever a $\sigma\cup\tau\cup\{R\}$-theory $\Sigma$ ``implicitly defines'' a relation $R$ in terms of the relations in $\sigma$, then $\Sigma$ entails an ``explicit definition'' of $R$ in terms of $\sigma$. The \emph{(non-projective) Beth property} is the special case for $\tau=\emptyset$. The \emph{weak Beth property} is a further weakening, where 
an explicit definition of $R$ in terms of $\sigma$ is required to exist only when 
every $\sigma$-structure has a \emph{unique}
$\sigma\cup\{R\}$-expansion satisfying $\Sigma$. Many practical applications of CIP in database theory and knowledge representation require only the projective Beth property. It is not immediately clear how to extend our methodology to a systematic study of the (projective) Beth property among decidable \FO-fragments. Indeed, \GFO already satisfies the non-projective Beth property \cite{Hoogland02:interpolation}, while \FOtwo satisfies the weak Beth definability property \cite{andreka2021two}. Given their applications, an interesting avenue of future work is to map the landscape of \FO-fragments satisfying these properties.

In the other direction, minimal
extensions of logics with \emph{uniform} 
interpolation (a strengthening of CIP) were studied
in~\cite{DAgostino2006}, although 
with limited results so far
(cf.~\cite[Thm.~14]{DAgostino2006}). Some of the minimal extensions of \PLTL fragments with CIP identified in~\cite{gheerbrant2009craig}, however, do satisfy uniform interpolation.
\looseness=-1

\section*{Acknowledgment}
\noindent We thank Jean Jung, Frank Wolter, and Malvin Gattinger for feedback on an earlier draft, and we thank Ian Pratt-Hartmann and Michael Benedikt for helpful remarks during a related workshop presentation. Balder ten Cate is supported by EU Horizon
2020 grant MSCA-101031081.

\bibliographystyle{alphaurl}
\bibliography{admin/bib.bib}

\end{document}